\newtheorem{theorem}{Theorem}
\title{Optimal Segmented Linear Regression for Financial Time Series Segmentation}
\name{Chi-Jen Wu, Wei-Sheng Zeng and Jan-Ming Ho}
\address{Institute of Information Science,\\ Academia Sinca\\ \{cjwu\}@arbor.ee.ntu.edu.tw, \{wszeng,hoho\}@iis.sinica.edu.tw}
\begin{document}
%
\maketitle
\begin{abstract}
Given a financial time series data, one of the most fundamental and interesting challenges is the need to learn the stock dynamics signals in a financial time series data. A good example is to represent the time series in line segments which is often used as a pre-processing step for learning marketing signal patterns in financial computing. In this paper, we focus on the problem of computing the optimal segmentations of such time series based on segmented linear regression models. The major contribution of this paper is to define the problem of Multi-Segment Linear Regression (MSLR) of computing the optimal segmentation of a financial time series, denoted as the MSLR problem, such that the global mean square error of segmented linear regression is minimized. We present an optimum algorithm with two-level dynamic programming (DP) design and show the optimality of OMSLR algorithm. The two-level DP design of OMSLR algorithm can mitigate the complexity for searching the best trading strategies in financial markets. It runs in O($kn^2$) time, where $n$ is the length of the time series sequence and $k$ is the number of non-overlapping segments that cover all data points.
\end{abstract}
\begin{keywords}
time-series, financial signal processing, segmented linear regression, stock market signal.
\end{keywords}
\section{Introduction}
\label{sec:intro}
Learning investment or trading signals from financial market data is one of most fundamental and interesting research challenges in both academia and industry~\cite{Murphy1999,Akansu2016}. For example, the American hedge fund, Renaissance Technologies\footnote{https://rentec.com} has leveraged financial signal processing technologies in stock trading for a long time. However, financial time series is difficult to summarize or be represented due to its highly non-stationary nature \cite{AbuMostafa2004}. Given a financial time series data, an initial processing step of learning the signal patterns is often to represent the time series in line segments to alleviate data uncertainty and noise~\cite{Lavrenko2000}.

In a segmentation process, a time series is divided into $k$ non-overlapping segments, and each segment is represented by a model to describe data points in the segment. The segment representation is measured using error functions depending on requirements of their applications. Actually, time series segmentation is widely for dimensionality reduction purposes in economics, engineering and science~\cite{Bingham2006}.


Time series segmentation has been extensively discussed in different domains and various models, which has resulted in a large number of works \cite{lin2007experiencing,Esling2012}. In 1961, the first version of time series segmentation problem is reported in \cite{Bellman61} and a dynamic programming (DP) algorithm with time complexity O($kn^3$) is also described. Time series segmentation also arises in data mining applications. The article \cite{Keogh2004} gives a review on applications of segmentation methods in data mining. The methods are classified into three categories, including sliding windows, bottom-up and top-down methods. The experimental comparisons showed that the bottom-up method results in better performance than other methods.

In the past few years, a few algorithms~\cite{Shatkay1996,RosmanNIPS2014,AcharyaICML2016,TerziSDM2006} have been proposed to reduce the time complexity of time series segmentation problem. The objective is to simplify represent of large scales time series data. The Piecewise Linear Approximation (PLA)~\cite{Shatkay1996} is a widely used approach for the segmentation task. Acharya et al.,~\cite{AcharyaICML2016} presented near-linear time algorithms that achieve a significant improvement compared to the DP approach on large time series. Interested reader can refer to Esling and Agon~\cite{Esling2012} who present a survey on approximation segmentation of time series. To the best our knowledge, previous approaches have not addressed the problem of optimum segmentation of financial time series. Most of them discussed segmentation methods in terms of approximation representation~\cite{Shatkay1996}, on-line processing \cite{RosmanNIPS2014} and their time complexity~\cite{AcharyaICML2016}.

In this paper, we are interested in the open question~\cite{Keogh1998}, how to best choose $k$, the optimal number of segments used to represent a particular time series. For financial trading strategies, $k$ is a measure of number of times of changes in market trend. It is also an indicator of how many time to trade in the market while receiving a reasonable amount of trading profits. Instead of answering the open question directly, we will start with focusing on minimizing global square error for a given $k$, and also derive the optimal representation of each of the $k$ segments.

Firstly, we formulate the Multi-Segment Linear Regression (MSLR) problem and define the MSLR square error as the performance index. Then, we present the Optimal Multi-Segment Linear Regression (OMSLR) algorithm, the two-level DP approach for producing the globally optimal segmentation. Finally, we show the optimality of the proposed OMSLR algorithm. The time complexity of the OMSLR algorithm is O($kn^2$), where $n$ is the length of the time series and $k$ is the number of non-overlapping segments that cover all data points. To the best our knowledge, this paper is the first to investigate the global optimal segmentation problem in time series processing, especially for financial time series.


This paper is organized as follows. In Section~\ref{sec:formulation} we present the formulation of segmentations as an optimization problem, named MSLR problem. In Section~\ref{sec:OMSLR} we present the OMSLR algorithm. Some segmentation experiments are presented in Section~\ref{sec:experimental}, and the results are summarized in Section~\ref{sec:conclusion}.

\section{Formulation of Problem MSLR}
\label{sec:formulation}
A formal definition of the Multi-Segment Linear Regression (MSLR) problem is described in this section. Given a time series $X = \{x_1, x_2,\dots,x_n\}$ and an integer $k$, the objective to MSLR problem is to partition $X$ into $k$ contiguous and non-overlapping intervals, i.e., $[l_{i-1},l_{i})$ and $[l_{k-1},l_{k}]$, where $l_0=1, l_k=n, 1\leq l_i \leq n, 1 \leq i \leq k-1, l_i \in \mathbf{N}$, such that the multi-segment linear regression square error, $\psi^2(1, n | \phi_k(X_n))$, with respect to the $k$-segment partition $\phi_k(X_n)=\{1, l_1, \dots, l_k\}$ is minimized. Note that $\psi^2(1, n | \phi_k(X_n))$ is also denoted as the Global Mean Square Error of the multi-segment linear regression representation, or GMSE for short. In other words, we have
\begin{equation}\label{eqn:1}
  \begin{aligned}
  \psi^2&(X_n | \phi_k(X_n)) \\
  =&\sum_{i=1}^{k-1} \sigma^2 (l_{i-1},l_i-1) + \sigma^2 (l_{k-1},l_k) \\
  =&\psi^2(X_{l_{k-1}-1}| \phi_{k-1}(X_{l_{k-1}-1})) + \sigma^2 (l_{k-1},l_k),
\end{aligned}
\end{equation}
where $\sigma^2(i,j)=\sum_{m=1}^{j}(x_m - \mu(i,j,m))^2$ and $\mu(i,j,m)) = \beta_{ij}*m + \alpha_{ij}$, $i \leq m \leq j$ with $\alpha_{ij}$ and $\beta_{ij}$ being the linear regression parameters on the interval $[i,j]$ of the time series $X_n$, i.e., $X_{ij}=\{x_i, x_{i+1}, \dots, x_j \}$. Thus we have $\alpha_{ij}$, $\beta_{ij}$ and $\sigma^2(i,j)$ as follows.

\begin{equation}\label{eqn:2}
\begin{aligned}
  \alpha_{ij}
  &= 	\bar x_{ij} - \beta_{ij} * \bar t_{ij} \\
  \beta_{ij}
  &= \frac{\sum_{m=i}^{j}(x_m - \bar x_{ij})(m - \bar t_{ij})}{\sum_{m=i}^{j}(m - \bar t_{ij})^2} \\
  \sigma^2(i,j)
  &=\sum_{m=1}^{j}(x_m - \mu(i,j,m))^2\\
  &=\sum_{m=1}^{j}(x_m - \beta_{ij} * m  - \alpha_{ij})^2,
\end{aligned}
\end{equation}
where $\bar t_{i,j} = \frac{(i+j)}{2}$, $\bar x_{i,j} = \frac{\sum_{m=i}^{j} x_m}{j-i+1}$, and $\mu(i,j,m) = \beta_{ij} * m + \alpha_{ij}, i \leq m \leq j$.  It can be shown that the above equations can be rewritten into iterative forms such that can be computed in O($n^2$) time for all $1 \leq i \leq j \leq n$. Due to space limitations we skip the detailed derivations here.

\section{OMSLR algorithm}
\label{sec:OMSLR}
We present the OMSLR algorithm for the MSLR problem as follows. Given a time series $X = \{x_1, x_2,\dots,x_n\}$ and an integer $k$, the algorithm OMSLR iteratively segments the time series $X_j = \{x_1, x_2, \dots, x_j\}$, where $1 \leq j \leq n$, into $i$ segments, starting with $i=1$ to $i=k$. Since Equation \ref{eqn:1} is an iterative function, we design a DP algorithm to compute the matrix $M$, in which $M[i,j]=(\gamma_{i,j},\rho^2_{i,j})$, for $i=1 \rightarrow k$, as a representation of the best way of partitioning $X_j$ into $i$ segments $\forall~j, 1 \leq j \leq n$. Here, $\gamma_{i,j}, 1 \leq \gamma_{i,j} \leq j$ denotes the starting point of the last segment of $ \hat \phi_i(X_j)$ and the variable $\rho^2_{i,j}$ is the global mean square error of $i$-segment partition of $X_j$ based on $\hat \phi_i(X_j)$.

$\gamma_{i,j}$ and $\rho^2_{i,j}$ can be computed by the following equations.
\begin{equation}\label{eqn:3}
\begin{aligned}
  \gamma_{i,j} &= \arg \min_{(i-1)d < m \leq j-d}\{ \rho^2_{i-1,m} + \sigma^2 (m+1, j)\};\\
  \rho^2_{i,j} &= \rho^2_{i-1,\gamma_{i,j}-1} + \sigma^2 (\gamma_{i,j}, j).
\end{aligned}
\end{equation}

\begin{algorithm}[t]
  \caption{OMSLR($X, k$)}\label{algo:OMSLR}
  \hspace*{0.02in} {\bf Input:}\\
  \hspace*{0.18in} $X \gets$ a time series data set, $n \gets$ the length of $X$\\
  \hspace*{0.18in} $k \gets$ the number of segments\\
  \hspace*{0.02in} {\bf Initialize:}\\
  \hspace*{0.18in} $gmse \gets \emptyset$ /* the Global Mean Square Error */\\
  \hspace*{0.18in} $\gamma_{i,j} \gets 0$, $\rho_{i,j} \gets 0$\\
  \hspace*{0.18in} $\sigma^2 \gets$ the pre-computed matrix $M$
  \begin{algorithmic}[1]
    \State /* for 2-segment linear regression */
    \For{$j = 1,2, \dots n$}
      \State $\gamma_{1,j} = \arg \min_{(1 < m \leq j)}\{ \sigma^2(1,m) + \sigma^2 (m+1, j)\}$
      \State $\rho^2_{1,j} = \sigma^2(1, \gamma_{1,j}-1) + \sigma^2(\gamma_{1,j}, j)$
    \EndFor
    \State end for
    \State /* dynamic programming for computing $k \geq 2$ */
    \For{$i = 2, \dots k-1$}
      \For{$j = 1,2, \dots n$}
        \State $\gamma_{i,j} = \arg \min_{(1 < m \leq j)}\{ \rho^2_{i-1,m} + \sigma^2 (m+1, j)\}$
        \State $\rho^2_{i,j} = \rho^2_{i-1,\gamma_{i,j}-1} + \sigma^2(\gamma_{i,j}, j)$
      \EndFor
      \State end for
    \EndFor
    \State end for
    \State /* backtrack $\gamma$ for the pivots */
    \State $pivot\_set \gets \{\}$
    \State $p\_seg \gets k$
    \State $p\_cur \gets n$
    \While{$ p\_seg > 0$}
    \State $pivot \gets \gamma_{p\_seg, p\_cur}$
    \State push $pivot$ into $pivot\_set$
    \State $p\_cur \gets pivot-1$
    \State $p\_seg \gets p\_seg-1$
    \EndWhile
    \State end while
    \State $gmse \gets \rho^2_{k,n}$
  \end{algorithmic}
  \hspace*{0.02in} {\bf Output:} $pivot\_set$, $gmse$
\end{algorithm}

In Equation \ref{eqn:3}, the $d$ is a constant used to control the minimum size of a segmentation, the default value of $d$ is 2. With the matrix $M$ and segmentation index $\gamma_{i,j}$, we can backtrack an $i$-segment partition on $X_j$ denoted as $ \hat \phi_i(X_j) = \{1, \hat l_1^{(i,j)}, \dots, \hat l_i^{(i,j)} \}$, by the following equations.

\[ \hat l_m^{(i,j)} = \left\{ \begin{array}{ll}
  j, \mbox{$m=i$};\\
  \gamma_{m+1,(\hat l_{m+1}^{(i,j)}-1)}, \mbox{$1 \leq m \leq i-1$}.
\end{array} \right.
\]

In specific, the $k$-segment partition of $X_n$, denoted as $ \hat \phi_k(X_n) = \{1, \hat l_1, \dots, \hat l_k \}$, is computed as follows.

\[ \hat l_i = \left\{ \begin{array}{ll}
  n, \mbox{$i = k$};\\
  \gamma_{i+1,( \hat l_{i+1}-1)}, \mbox{$1 \leq i \leq k-1$},
\end{array} \right.
\]
where $ x_{\hat l_0} = x_1$ and $x_{\hat l_k} = x_n$.
The algorithm OMSLR, given as Algorithm \ref{algo:OMSLR}, provides an optimal solution of the MSLR problem. In the following, we are going to show that $\hat \phi_k(X_n)$ is an optimal solution of the MSLR problem.

\begin{theorem}\label{theorem:1}
  Given a time series $X_n = \{x_1, x_2,\dots,x_n\}$ and the number of segments $k$, the $i$-segment partition $\hat \phi_{i+1}(X_j), \forall~j, 1 \leq j \leq n, \forall~i, 1 \leq i \leq k$ as computed by Algorithm OMSLR is optimum.
\end{theorem}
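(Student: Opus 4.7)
The plan is to prove the theorem by induction on $i$, the number of segments, exploiting the additive decomposition of the multi-segment error in Equation~\ref{eqn:1} together with the Bellman-style recurrence in Equation~\ref{eqn:3}. For the base case $i = 1$, the only feasible $1$-segment partition of a prefix $X_j$ is the trivial one with the single interval $[1,j]$, so the value $\sigma^2(1,j)$ stored by the initialization is tautologically the minimum $1$-segment GMSE, and $\hat\phi_1(X_j)$ is optimum for every $j$.

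For the inductive step, I would assume that for some fixed $i \geq 1$ and every $m$ with $1 \leq m \leq n$, the partition $\hat\phi_i(X_m)$ produced by OMSLR attains the minimum $i$-segment GMSE on $X_m$, so that $\rho^2_{i,m}$ equals this optimum. Fix $j$ and let $\phi^\ast_{i+1}(X_j) = \{1, l_1^\ast, \dots, l_i^\ast, j\}$ be any optimal $(i+1)$-segment partition of $X_j$. By the iterative form of Equation~\ref{eqn:1},
$$
\psi^2\bigl(X_j \,\big|\, \phi^\ast_{i+1}(X_j)\bigr) \;=\; \psi^2\bigl(X_{l_i^\ast - 1} \,\big|\, \phi^\ast_i(X_{l_i^\ast - 1})\bigr) + \sigma^2(l_i^\ast, j),
$$
where $\phi^\ast_i$ denotes the prefix of $\phi^\ast_{i+1}$ consisting of its first $i$ segments. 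The inductive hypothesis gives $\psi^2(X_{l_i^\ast - 1} \mid \phi^\ast_i(X_{l_i^\ast - 1})) \geq \rho^2_{i,\, l_i^\ast - 1}$, so the total cost is at least $\rho^2_{i,\, l_i^\ast - 1} + \sigma^2(l_i^\ast, j)$, which in turn is no smaller than $\min_m\{\rho^2_{i,\, m-1} + \sigma^2(m, j)\} = \rho^2_{i+1,j}$ by the definition of $\gamma_{i+1,j}$ in Equation~\ref{eqn:3}. Conversely, the partition $\hat\phi_{i+1}(X_j)$ reconstructed by backtracking $\gamma$ realizes exactly this minimum, so it is optimum and the induction advances.

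The argument is a standard application of the principle of optimality, so I expect the only delicate step to be a clean exchange/substitution justification for the inequality $\psi^2(X_{l_i^\ast - 1} \mid \phi^\ast_i(X_{l_i^\ast - 1})) \geq \rho^2_{i,\, l_i^\ast - 1}$: if the first $i$ segments of $\phi^\ast_{i+1}(X_j)$ did not form an optimal $i$-segment partition of $X_{l_i^\ast - 1}$, one could splice in $\hat\phi_i(X_{l_i^\ast - 1})$ without disturbing the last segment $[l_i^\ast, j]$ and strictly reduce the GMSE, contradicting the optimality of $\phi^\ast_{i+1}(X_j)$. Once this splicing argument is recorded, the induction closes; the minimum-length parameter $d$ merely narrows the feasible window $(i-1)d < m \leq j-d$ over which optimality is asserted, and the backtracking loop at the end of Algorithm~\ref{algo:OMSLR} produces an explicit witness achieving $\rho^2_{k,n}$.
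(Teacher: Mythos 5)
Your proof is correct and follows essentially the same route as the paper: induction on the number of segments, with the inductive step resting on the additive decomposition of the error from Equation~\ref{eqn:1}, the induction hypothesis applied to the prefix $X_{l_i^\ast-1}$, and the minimization in Equation~\ref{eqn:3}. The only difference is presentational --- the paper phrases the inductive step as a proof by contradiction (assuming a strictly better $\phi^\ast_{i+1}(X_\tau)$ exists and deriving $\psi^2(X_\tau\,|\,\hat\phi_{i+1}(X_\tau)) \leq \psi^2(X_\tau\,|\,\phi^\ast_{i+1}(X_\tau))$), whereas you give the equivalent direct two-sided bound; your explicit splicing justification is a slightly more careful record of the same exchange argument the paper leaves implicit in its Equation~\ref{eqn:5}.
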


\begin{proof}
  We give a sketch of the proof and prove Theorem \ref{theorem:1} by contradiction. We skip the case $k=1$, it is a natural linear regression. For the case $k=2$, it is obviously to see that $\forall~j, 1 \leq j \leq n$, $\hat \phi_{2}(X_j)$ is optimum, since Algorithm OMSLR enumerated all the feasible solutions.

  In the induction step, we assume that $\forall~j, 1 \leq j \leq n$, $\hat \phi_{i}(X_j)$ is optimum. To show that $\hat \phi_{i+1}(X_j)$ is also optimum, $\forall~j, 1 \leq j \leq n$, we assume that there exits an integer $\tau, 1 \leq \tau \leq n$, such that $\hat \phi_{i+1}(X_{\tau})$ is not optimum. Let $\phi_{i+1}^{*}(X_{\tau})=\{1, l^*_1, \dots, l^*_i, l^*_{i+1}=\tau\}$ be the optimum $(i+1)$-segment partition of $X_{\tau}$. Then we have the following equation:
  \begin{equation}\label{eqn:4}
    \begin{aligned}
      \psi^2&(X_{\tau} | \hat \phi_{i+1}(X_{\tau})) > \psi^2(X_{\tau} | \phi_{i+1}^*(X_{\tau})).
    \end{aligned}
  \end{equation}
The induction assumption says that the $i$-segment partition, $\hat \phi_{i}(X_{l^*_{i}-1)}$ is optimum, thus it implies the Equation \ref{eqn:5}:
  \begin{equation}\label{eqn:5}
    \begin{aligned}
      \psi^2&(X_{l^*_{i}-1} | \phi_{i}^*(X_{l^*_{i}-1})) = \psi^2(X_{l^*_{i}-1} | \hat \phi_{i}(X_{l^*_{i}-1})),
    \end{aligned}
  \end{equation}
  and Algorithm OMSLR also guarantees the Equation \ref{eqn:6}.
  \begin{equation}\label{eqn:6}
    \begin{aligned}
      \psi^2(X_{\tau} | \hat \phi_{i+1}(X_{\tau})) &\leq \psi^2(X_{l^*_{i}-1} | \hat \phi_{i}(X_{l^*_{i}-1})) + \sigma^2 (l^*_{i}, \tau) \\
      &= \psi^2(X_{l^*_{i}-1} | \phi_{i}^*(X_{l^*_{i}-1})) + \sigma^2 (l^*_{i}, \tau) \\
      &= \psi^2(X_{\tau} | \phi^*_{i+1}(X_{\tau}))
    \end{aligned}
  \end{equation}
  Equation \ref{eqn:5} and \ref{eqn:6} imply that the assumption of Equation \ref{eqn:4} is a contradiction. Thus we have Theorem \ref{theorem:1}.
\end{proof}

\begin{figure}[t]
  \centering
  \centerline{\includegraphics[width=7.5cm]{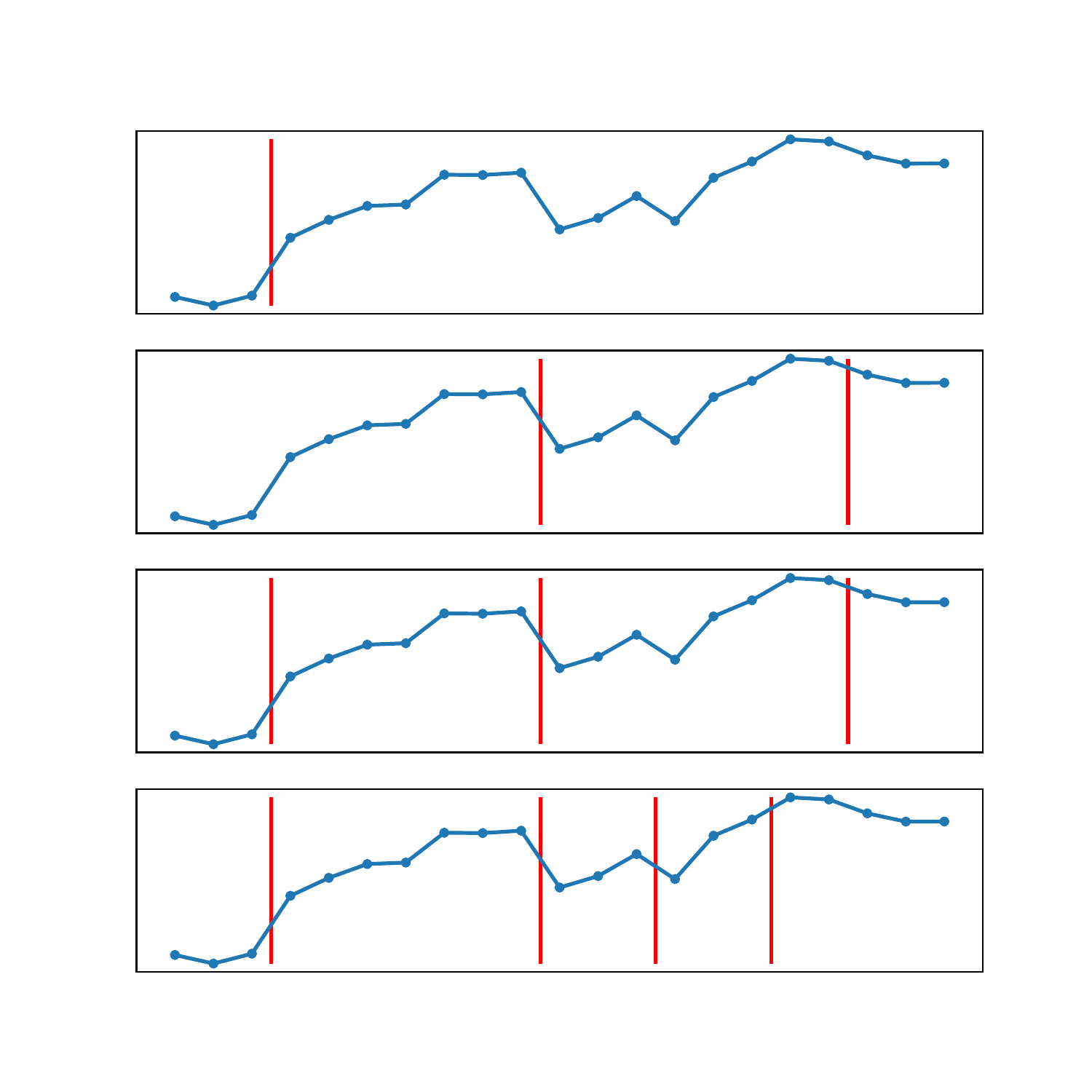}}
  \caption{A step by step results of OMSLR, for $k=2 \rightarrow 5$.}
  \label{fig:res1}
\end{figure}

The running time of the algorithm OMSLR is obviously to seen to take O($kn^2$) time. Due to the space limitation, we omit the detailed proof of Theorem \ref{theorem:2}.

\begin{theorem}\label{theorem:2}
  The running time of the algorithm OMSLR is at most O($kn^2$) for $X_n = \{x_1, x_2,\dots,x_n\}$ and $k$ is the number of non-overlapping segments of $X_n$.
\end{theorem}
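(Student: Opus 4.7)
The plan is to bound the running time by walking through the algorithm in order and charging each step to a nested loop structure, using the fact (asserted in Section~\ref{sec:formulation}) that $\sigma^2(i,j)$ can be precomputed for all $1\le i\le j\le n$ in $O(n^2)$ time via an iterative update of $\alpha_{ij}$ and $\beta_{ij}$. The claim then reduces to showing that every non-preprocessing step of OMSLR runs in $O(1)$ once we have the $\sigma^2$ table, and then counting the three nested loops.

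First I would handle the preprocessing: computing the matrix $M[i,j]=\sigma^2(i,j)$ for all $1\le i\le j\le n$ costs $O(n^2)$ time and $O(n^2)$ space; this is absorbed into $O(kn^2)$ since $k\ge 1$. Second, I would analyze the base case loop (lines~2--5 of Algorithm~\ref{algo:OMSLR}), which for each $j\in\{1,\dots,n\}$ performs an $\arg\min$ over $m$ with $1<m\le j$, each candidate requiring one table lookup $\sigma^2(1,m)+\sigma^2(m+1,j)$ and one comparison; this is $O(j)$ per outer iteration and $\sum_{j=1}^{n} O(j)=O(n^2)$ in total. Third, I would analyze the main DP (lines~8--15): the outer loop runs $k-2$ times, the middle loop $n$ times, and the inner $\arg\min$ ranges over at most $j-d=O(n)$ values of $m$, each requiring an $O(1)$ lookup of $\rho^2_{i-1,m}$ and $\sigma^2(m+1,j)$; multiplying gives $O(k\cdot n\cdot n)=O(kn^2)$. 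Finally, the backtracking \texttt{while} loop iterates exactly $k$ times with $O(1)$ work each, contributing only $O(k)$.

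Summing preprocessing, base case, main DP, and backtracking yields $O(n^2)+O(n^2)+O(kn^2)+O(k)=O(kn^2)$, giving the stated bound. The only non-routine point that needs care is the $O(1)$ cost of the inner-loop update, which relies on (i) $\sigma^2(m+1,j)$ being available by table lookup from preprocessing, and (ii) $\rho^2_{i-1,m}$ being available because the outer loop processes segment counts in increasing order of $i$, so all entries of row $i-1$ are finalized before row $i$ is touched. I expect the main obstacle, such as it is, to be writing the preprocessing step precisely enough to justify the $O(n^2)$ figure for $\sigma^2$: the iterative formulas for $\bar t_{ij}$, $\bar x_{ij}$, $\sum_m x_m$, $\sum_m x_m m$, and $\sum_m m^2$ extend from interval $[i,j]$ to $[i,j+1]$ in $O(1)$ operations, so each of the $O(n^2)$ intervals is updated in $O(1)$ amortized time; once that is in place, the remainder of the argument is pure loop counting.
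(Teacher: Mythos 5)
Your proposal is correct: the paper omits the proof of Theorem~\ref{theorem:2} entirely (stating only that the bound is ``obviously seen''), and your loop-counting argument --- $O(n^2)$ preprocessing of the $\sigma^2$ table via $O(1)$ incremental updates of the running sums, $O(n^2)$ for the base case, $O(kn^2)$ for the three nested loops of the main DP, and $O(k)$ for backtracking --- is precisely the standard argument the authors are gesturing at. No gaps; your identification of the $O(1)$ inner-loop cost (table lookup of $\sigma^2(m+1,j)$ plus the already-finalized $\rho^2_{i-1,m}$) as the one point requiring justification is apt.
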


\begin{figure}[t]
\begin{minipage}[b]{1.0\linewidth}
    \centering
    \centerline{\includegraphics[width=9cm]{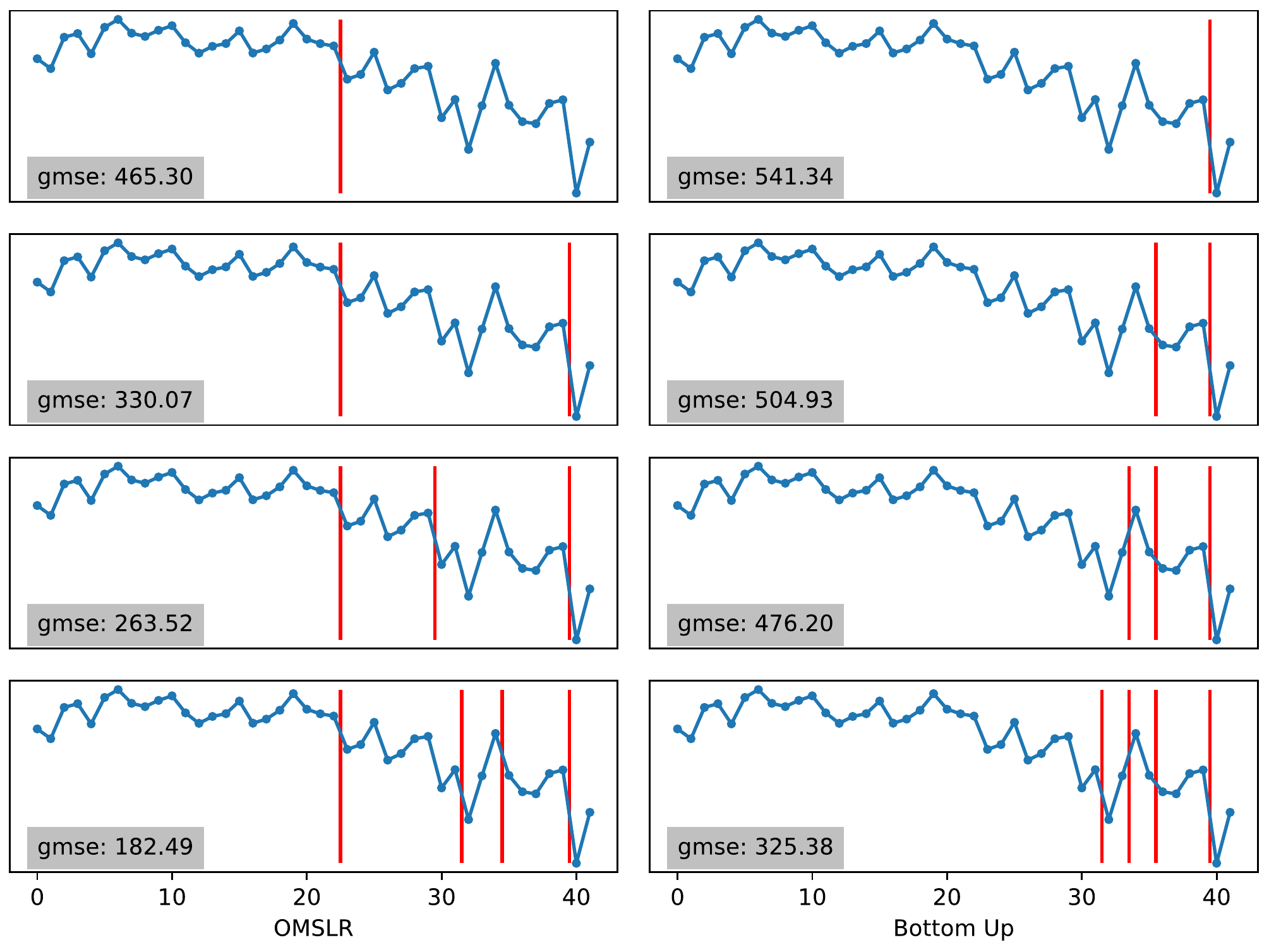}}
    \centerline{(a) The segmentations of OMSLR and Bottom-up}\medskip
  \end{minipage}
  \begin{minipage}[b]{1.0\linewidth}
    \centering
    \centerline{\includegraphics[width=8cm]{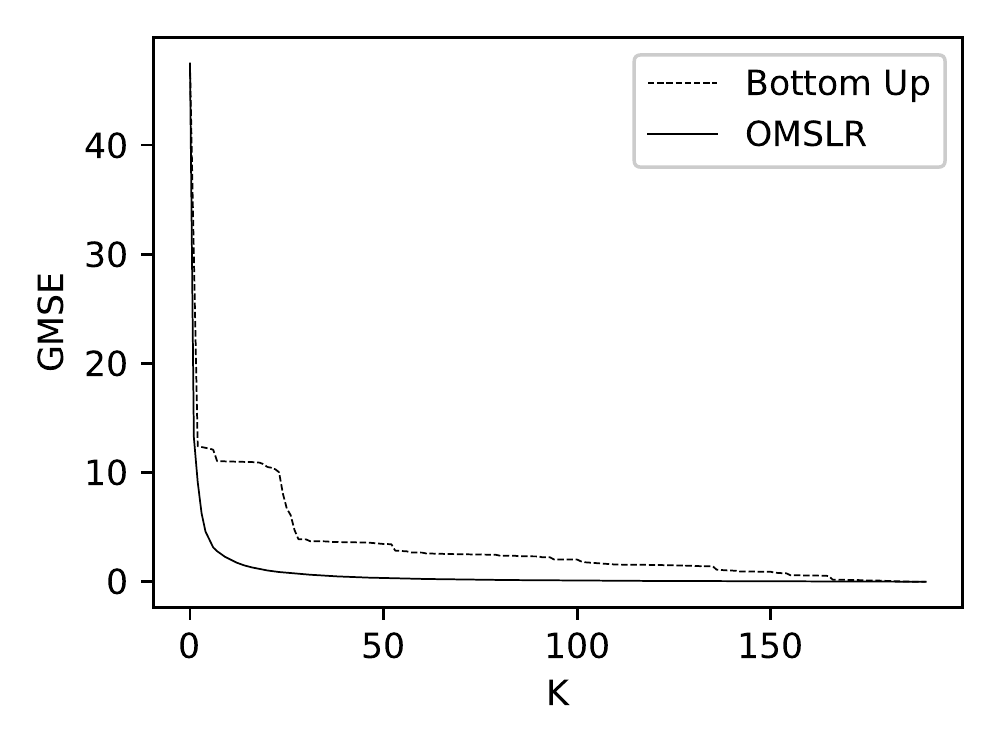}}
    \centerline{(b) The comparison of GMSE }\medskip
  \end{minipage}
  \caption{The experimental results.}
  \label{fig:res2}
\end{figure}

To illustrate how the algorithm OMSLR handles $k$-segment partition of $X_{n}$, we show a step-by-step results in Fig. \ref{fig:res1}. To keep things simple we assume that $k=5$ and $n=22$ in this example. In the Fig.~\ref{fig:res1}, each step of the $i$-segment partition is demonstrated, for $i=2 \rightarrow 5$. As shown as Theorem \ref{theorem:1}, each $i$-segment partition is an optimal result. As we presented in Section~\ref{sec:formulation} and Section~\ref{sec:OMSLR}, the first step of OMSLR algorithm is to generate the matrix $M$, which can be processed iteratively based on Equation \ref{eqn:2} in a DP way. So that the algorithm OMSLR, based on backtracking the matrix $M$, can derive each optimal $i$-segment partition by a DP approach as shown in Algorithm \ref{algo:OMSLR} (in Line:7 $\rightarrow$ Line:12). Algorithm \ref{algo:OMSLR} observably is a two-level DP design. Leveraging the two-level DP design, the algorithm OMSLR can return any $i$-segment partition of $X_{n}$ with a reasonable size of $i$ without having to re-compute from scratch. Due to its low complexity, the algorithm OMSLR offers an opportunity for one to search the best trading strategies in financial computing.

\section{Experimental Results}
\label{sec:experimental}
We provide an experimental evaluation of the two algorithms, i.e., our OMSLR and the Bottom-up algorithm \cite{Keogh2004}, for examining the performance in terms of Global Mean Square Error (GMSE), is defined in Section~\ref{sec:formulation}, with respect to the value of $k$. The two results are summarized in Fig \ref{fig:res2}.

In the first experiment, we compare the step-by-step segment partition as $k$ varies from 2 to 5 in Fig.~\ref{fig:res2} (a). For illustration purposes, we plot the time series with small sample size. The data only contains 42 data points, and spans a period from 2008-08-01 to 2008-09-30 selected from S\&P 500 index historical daily price data. In Fig.~\ref{fig:res2} (a), it is shown that OMSLR has smaller GMSE than the Bottom-up algorithm. It also shows that OMSLR always maintains optimality in partitioning the time series into multi-segment linear representation for each value of $k$.

In the second experiment, we focus on analyzing the relationship between $k$ and GMSE with a large sample size. We use S\&P 500 index historical 1-minute price data from 2010-07-01 to 2010-07-07 with a total of 1,560 data points. We compare the GMSE calculated by OMSLR and Bottom-up algorithm for each $k$ from 1 to 200. Fig.~\ref{fig:res2} (b) demonstrates that GMSE generated by the two algorithms both decreases monotonically, and sharply at the beginning. Therefore, a searching method can be designed for locating the best value of $k$ with a given GMSE bound since the curve is a monotonically decreasing function. Compared to Bottom-up algorithm, a much smaller number of segments is required for algorithm OMSLR to find a multi-segment linear regression representation of the given time series to satisfy a given GMSE bound.

\section{Conclusion and Future work}
\label{sec:conclusion}
In this paper we study the problem of optimal segmentation of financial time series based on segmented linear regression models. We present the OMSLR algorithm based on the two-level DP design. We show that the algorithm is optimum with time complexity O($kn^2$). We also demonstrate its application in analyzing financial time series. The representation generated by the algorithm OMSLR may be fed into other intelligent applications, e.g., to predict future trend of a financial market. The algorithm may also find further applications, e.g., we may use it as a benchmark for other on-line stock trading algorithms~\cite{Conegundes2020}. The on-line version of the OMSLR algorithm can also be used in stock trading. We may also use the algorithm in processing data of other application domains, such as medical and data science applications.

\vfill\pagebreak
\balance

\bibliographystyle{IEEEbib}
\bibliography{main}

\end{document}